\theoremstyle{definition}
\newtheorem{theorem}{Theorem}[section]
\newtheorem{corollary}[theorem]{Corollary}
\newtheorem{remark}[theorem]{Remark}
\newtheorem{example}[theorem]{Example}
\newtheorem{openproblem}[theorem]{Open Problem}
\def\({\left(}
\def\){\right)}
\newcommand{\de}{\textnormal{d}}
\newcommand{\ds}{\displaystyle}
\newcommand{\eg}{\textit{e.g.} }
\newcommand{\citep}[2]{\cite{#1}, p. #2}
\newcommand{\sref}[1]{\S\ref{#1}}
\newcommand{\image}[3]{\begin{figure*}[ht]
\includegraphics[width=#2\textwidth]{#1}
\caption{\small{\label{#1}#3}}\end{figure*}}
\newcommand{\dsfrac}[2]{\ds{\frac{#1}{#2}}}
\newcommand{\schw}{Schwarzschild}
\def\hyph{-\penalty0\hskip0pt\relax}
\newcommand{\semiriem}{semi{\hyph}Riemannian}
\newcommand{\semireg}{semi{\hyph}regular}
\newcommand{\ssemireg}{Semi{\hyph}regular}
\begin{document} 
 
\title{Schwarzschild's Singularity is Semi-Regularizable}
\author{Cristi \ Stoica}
\date{November 15, 2011}
\thanks{Partially supported by Romanian Government grant PN II Idei 1187.}

\begin{abstract}
It is shown that the Schwarzschild spacetime can be extended so that the metric becomes analytic at the singularity. The singularity continues to exist, but it is made degenerate and smooth, and the infinities are removed by an appropriate choice of coordinates. A family of analytic extensions is found, and one of these extensions is semi-regular. A degenerate singularity doesn't destroy the topology, and when is semi-regular, it allows the field equations to be rewritten in a form which avoids the infinities, as it was shown elsewhere \cite{Sto11a,Sto11b}. In the new coordinates, the Schwarzschild solution extends beyond the singularity. This suggests a possibility that the information is not destroyed in the singularity, and can be restored after the evaporation.
\bigskip
\noindent 
\keywords{Schwarzschild metric,Schwarzschild black hole,Schwarzschild spacetime, information paradox,singular semi-Riemannian manifolds,singular semi-Riemannian geometry,degenerate manifolds,semi-regular semi-Riemannian manifolds,semi-regular semi-Riemannian geometry}
\end{abstract}


\maketitle

\setcounter{tocdepth}{1}
\tableofcontents

\section*{Introduction}


The {\schw} black hole solution, expressed in the {\schw} coordinates, has the following metric tensor:

\begin{equation}
\label{eq_schw_schw}
\de s^2 = -\(1-\dsfrac{2m}{r}\)\de t^2 + \(1-\dsfrac{2m}{r}\)^{-1}\de r^2 + r^2\de\sigma^2,
\end{equation}
where
\begin{equation}
\label{eq_sphere}
\de\sigma^2 = \de\theta^2 + \sin^2\theta \de \phi^2
\end{equation}
is the metric of the unit sphere $S^2$, $m$ the mass of the body, and the units were chosen so that $c=1$ and $G=1$ (see \eg \citep{HE95}{149}).

The first two terms in the right hand side of equation \eqref{eq_schw_schw} don't depend on the coordinates $\theta$ and $\phi$, and $\de\sigma^2$ is independent on the coordinates $r$ and $t$. Therefore we can view this solution as a warped product between a two-dimensional {\semiriem} space and the sphere $S^2$ with the canonical metric \eqref{eq_sphere}. We can use this property to change the coordinates $r$ and $t$ independently, ignoring in calculations the term $r^2\de\sigma^2$, which we reintroduce at the end.

The singularity at $r=2m$, which makes the coefficient $\(1-\dsfrac{2m}{r}\)^{-1}$ become infinite, is only apparent, as shown by the Eddington-Finkelstein coordinates (\citep{HE95}{150}).

As $r\searrow 0$, the coefficient $\(1-\dsfrac{2m}{r}\)^{-1}$ tends to 0, and the coefficient $-\(1-\dsfrac{2m}{r}\)$ tends to $+\infty$. This is a genuine singularity, as we can see from the fact that the scalar $R_{abcd}R^{abcd}$ tends to $\infty$. This seems to suggest that the {\schw} metric cannot be made smooth at $r=0$. In fact, as we will see, we can find coordinate systems in which the components of the metric, although degenerate, are analytic (hence they are finite), even at the genuine singularity given by $r=0$. Moreover, we will see that we can find an analytic extension of the {\schw} spacetime, which is \textit{\semireg}.

In \cite{Sto11a,Sto11b,Sto11d} it was developed the \textit{singular {\semiriem} geometry} for metrics which are allowed to change their signature, in particular to be degenerate. Such metrics $g_{ab}$ are smooth, but $g^{ab}$ tends to $\infty$ when the metric becomes degenerate. The notion of Levi-Civita connection cannot be defined, and the curvature cannot be defined canonically. But in the special case of {\semireg} metrics we can construct a canonical Riemann curvature tensor $R_{abcd}$, which is smooth, although $R^a{}_{bcd}$ is not canonically defined and is singular. It admits canonical Ricci and scalar curvatures, which may be discontinuous or infinite at the points where the metric changes its signature. The usual tensorial and differential operations, normally obstructed by the degeneracy of the metric, can be replaced by equivalent operations which work fine, if the metric is {\semireg}.

In this paper we will show that the {\schw} solution can be extended analytically to such a well behaved {\semireg} solution.

\section{Analytic extension of the {\schw} spacetime}
\label{s_schw_analytic}

\begin{theorem}
\label{thm_schw_analytic}
The {\schw} metric admits an analytic extension at $r=0$.
\end{theorem}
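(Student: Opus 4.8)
The plan is to exploit the warped-product structure already noted: since $\de\sigma^2$ does not involve $t$ or $r$, it suffices to regularize the two-dimensional block in $(t,r)$ and then reintroduce $r^2\de\sigma^2$, which will be analytic once $r$ is. Near $r=0$ the offending coefficients are $g_{tt}=-\dsfrac{r-2m}{r}\sim\dsfrac{2m}{r}$ and $g_{rr}=\dsfrac{r}{r-2m}\sim-\dsfrac{r}{2m}$, so $g_{tt}$ diverges while $g_{rr}$ vanishes. A naive separate rescaling $r=r(\tau)$, $t=t(\xi)$ cannot succeed, because a factor depending on $\xi$ alone cannot cancel the $r^{-1}$ divergence sitting in front of $\de t^2$; the substitution must couple the two coordinates.

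First I would introduce new coordinates
\begin{equation}
r=\tau^{S},\qquad t=\xi\,\tau^{T},
\end{equation}
with positive integer exponents $S$ and $T$ to be fixed, and compute the pullback. The radial term becomes $g_{rr}\,\de r^2=\dsfrac{S^2\tau^{3S-2}}{\tau^{S}-2m}\,\de\tau^2$, already analytic at $\tau=0$. For the temporal term, $\de t=T\xi\tau^{T-1}\de\tau+\tau^{T}\de\xi$ generates three monomials, and after multiplication by $g_{tt}\sim 2m\,\tau^{-S}$ the most singular of them, coming from $\de\tau^2$, carries $\tau^{2T-2-S}$, while the $\de\tau\,\de\xi$ and $\de\xi^2$ contributions carry strictly higher powers. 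Requiring every exponent to be nonnegative forces $T\ge\dsfrac{S+2}{2}$, and under this condition all coefficients of $\de\tau^2$, $\de\tau\,\de\xi$, and $\de\xi^2$ extend analytically across $\tau=0$. Since the warp factor is $r^2\de\sigma^2=\tau^{2S}\de\sigma^2$, also analytic, the entire metric is analytic in $(\tau,\xi,\theta,\phi)$ at $\tau=0$.

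Next I would verify that this genuinely defines an \emph{extension} and produces a \emph{degenerate} singularity. The Jacobian of $(\tau,\xi)\mapsto(t,r)$ equals $S\tau^{S+T-1}$, nonzero for $\tau\ne 0$, so the map is a diffeomorphism onto the original region $r>0$ and the new chart adjoins the previously missing locus $\tau=0$. Because the two-dimensional Schwarzschild block satisfies $g_{tt}\,g_{rr}=-1$ identically, the pulled-back determinant is $-S^2\tau^{2(S+T-1)}$, vanishing only at $\tau=0$; thus the extended metric keeps the correct Lorentzian signature for $\tau>0$ and degenerates at $\tau=0$. Evaluating the analytic metric there leaves, for the representative choice $S=T=2$, only $8m\xi^2\de\tau^2$, confirming that the singularity survives but is now smooth and degenerate rather than infinite. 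Letting $(S,T)$ range over the admissible integers yields the promised family of analytic extensions.

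The main obstacle I anticipate is not finding \emph{some} taming substitution but pinning down the admissibility condition exactly: one must identify the single most-divergent monomial—the coefficient of $\de\tau^2$ inside $g_{tt}\,\de t^2$, with power $\tau^{2T-2-S}$—as the binding term, and then confirm that the cross term $\de\tau\,\de\xi$ and the radial $\tau^{3S-2}$ contribution carry only higher powers, so that no hidden divergence is regenerated once the full non-diagonal two-dimensional metric is assembled. The determinant identity $g_{tt}\,g_{rr}=-1$ combined with the Jacobian $S\tau^{S+T-1}$ then makes the degeneracy and signature transparent, after which checking analyticity of the finitely many remaining coefficients is routine; the substance lies entirely in the choice of exponents.
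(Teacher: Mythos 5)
Your proposal is correct and follows essentially the same route as the paper: a singular power-law coordinate change that absorbs the $1/r$ divergence of $g_{tt}$, which is exactly the paper's ansatz $r=\rho^2$, $\partial t \propto \rho$, specialized in its Example to $r=\tau^2$, $t=\xi\tau^T$ (your case $S=2$), with the same power-counting condition. The only differences are cosmetic: you allow a general exponent $r=\tau^S$ and make the diffeomorphism/degeneracy checks explicit, which the paper handles through its determinant corollary and the warped-product theorem of \cite{Sto11b}.
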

\begin{proof}
It is enough to make the coordinate change in a neighborhood of the singularity -- in the region $r<2m$. On that region, the coordinate $r$ is timelike, and $t$ is spacelike. Let's change the coordinates by
\begin{equation}
\label{eq_coordinate_analytic}
\begin{array}{l}
\bigg\{
\begin{array}{ll}
t &= t(\xi, \tau) \\
r &= r(\xi,\tau) \\
\end{array}
\\
\end{array}
\end{equation}

Recall that the metric coefficients in the {\schw} coordinates are
\begin{equation}
\label{eq_metric_coeff_schw}
g_{tt} = \dsfrac{2m-r}{r},\,
g_{rr} = -\dsfrac{r}{2m-r},\,
g_{tr} = g_{rt} = 0.
\end{equation}
In the new coordinates, the metric coefficients are
\begin{equation}
\label{eq_metric_coeff_analytic_tau_tau}
g_{\tau\tau} = \(\dsfrac{\partial r}{\partial \tau}\)^2g_{rr} + \(\dsfrac{\partial t}{\partial \tau}\)^2 g_{tt}
\end{equation}
\begin{equation}
\label{eq_metric_coeff_analytic_tau_xi}
g_{\tau\xi} = \dsfrac{\partial r}{\partial \tau}\dsfrac{\partial r}{\partial \xi}g_{rr} + \dsfrac{\partial t}{\partial \tau}\dsfrac{\partial t}{\partial \xi}g_{tt}
\end{equation}
\begin{equation}
\label{eq_metric_coeff_analytic_xi_xi}
g_{\tau\tau} = \(\dsfrac{\partial r}{\partial \xi}\)^2g_{rr} + \(\dsfrac{\partial t}{\partial \xi}\)^2 g_{tt}
\end{equation}
From \eqref{eq_metric_coeff_schw} we see that $g_{rr}$ is analytic around $r=0$, hence the only condition for the partial derivatives of $r$ with respect to $\tau$ and $\xi$ is that they are smooth. The expression of $g_{tt}$ on the other hand, has $r$ as denominator, hence we have to cancel it. From equations (\ref{eq_metric_coeff_analytic_tau_tau}--\ref{eq_metric_coeff_analytic_xi_xi}),  we see that $r$ as the denominator in the expression of $g_{tt}$ is canceled if the partial derivatives of $t$ have the form:
\begin{equation}
\label{eq_t_rho}
\begin{array}{l}
\bigg\{
\begin{array}{ll}
	\partial t/\partial \tau &= \rho F_\tau\\
	\partial t/\partial \xi &= \rho F_\xi\\
\end{array}
\\
\end{array}
\end{equation}
where $\rho$, $F_\tau$ and $F_\xi$ are smooth functions in $\tau$ and $\xi$, and
\begin{equation}
\label{eq_r_rho}
r=\rho^2(\tau,\xi).
\end{equation}
The conditions \eqref{eq_t_rho} are satisfied for example if $t$ has the form $t=\xi\rho^2$.

The metric becomes
\begin{equation}
\label{eq_metric_coeff_analytic_tau_tau_rho}
g_{\tau\tau} = -\dsfrac{4\rho^4}{2 m - \rho^2}\(\dsfrac{\partial \rho}{\partial \tau}\)^2 + (2m - \rho^2) F_\tau^2
\end{equation}
\begin{equation}
\label{eq_metric_coeff_analytic_tau_xi_rho}
g_{\tau\xi} = -\dsfrac{4\rho^4}{2 m - \rho^2} \dsfrac{\partial \rho}{\partial \tau}\dsfrac{\partial \rho}{\partial \xi} + (2m - \rho^2) F_\tau F_\xi
\end{equation}
\begin{equation}
\label{eq_metric_coeff_analytic_xi_xi_rho}
g_{\xi\xi} = -\dsfrac{4\rho^4}{2 m - \rho^2}\(\dsfrac{\partial \rho}{\partial \xi}\)^2 + (2m - \rho^2) F_\xi^2
\end{equation}

We can see now that the new metric is smooth around $r=0$ -- none of its components become infinite at $r=0$. It is singular, because it is degenerate, but it is smooth. If the functions $\rho$, $F_\tau$ and $F_\xi$ are analytic, so is the new metric.

To go back to the four-dimensional spacetime, we take the warped product between the above metric and the sphere $S^2$, with the warping function $\rho^2$. The warping function $\rho^2$ is smooth and cancels at the singular points $r=0$. Hence, according to the central theorem of degenerate warped products from \cite{Sto11b}, the warped product between the two-dimensional extension $(\tau,\xi)$ and the sphere $S^2$, with warping function $\rho^2$, is degenerate. This is the needed extension of the {\schw} solution.
\end{proof}

\begin{corollary}
\label{thm_metric_analytic_rho}
The metric in the coordinates $(\tau,\xi,\theta,\phi)$ is
\begin{equation}
\label{eq_schw_analytic_rho}
\de s^2 = -\dsfrac{4\rho^4}{2m-\rho^2}\de \rho^2 + (2m-\rho^2)\(F_\tau\de\tau + F_\xi\de\xi\)^2 + \rho^4\de\sigma^2
\end{equation}
\end{corollary}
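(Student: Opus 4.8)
The plan is to read the line element straight off the metric components already produced in the proof of Theorem~\ref{thm_schw_analytic}, and to notice that its cross terms reorganize into two perfect squares. Writing the two-dimensional part of the line element as
\begin{equation*}
\de s^2_{(2)} = g_{\tau\tau}\de\tau^2 + 2g_{\tau\xi}\de\tau\de\xi + g_{\xi\xi}\de\xi^2,
\end{equation*}
I would substitute the expressions \eqref{eq_metric_coeff_analytic_tau_tau_rho}, \eqref{eq_metric_coeff_analytic_tau_xi_rho} and \eqref{eq_metric_coeff_analytic_xi_xi_rho} and then sort the resulting monomials according to their common scalar factor.

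First I would gather the three terms carrying the factor $-\dsfrac{4\rho^4}{2m-\rho^2}$. Their bracketed sum is $\(\dsfrac{\partial\rho}{\partial\tau}\)^2\de\tau^2 + 2\dsfrac{\partial\rho}{\partial\tau}\dsfrac{\partial\rho}{\partial\xi}\de\tau\de\xi + \(\dsfrac{\partial\rho}{\partial\xi}\)^2\de\xi^2$, which is precisely the square of the total differential $\de\rho = \dsfrac{\partial\rho}{\partial\tau}\de\tau + \dsfrac{\partial\rho}{\partial\xi}\de\xi$; hence this block collapses to $-\dsfrac{4\rho^4}{2m-\rho^2}\de\rho^2$. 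The three terms carrying the factor $(2m-\rho^2)$ assemble in exactly the same manner into $(2m-\rho^2)\(F_\tau\de\tau + F_\xi\de\xi\)^2$. Together these give the first two summands of \eqref{eq_schw_analytic_rho}.

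It then remains to restore the spherical factor. By the warped-product construction closing the proof of Theorem~\ref{thm_schw_analytic}, the four-dimensional metric is obtained by adjoining the term $r^2\de\sigma^2$, and substituting $r=\rho^2$ from \eqref{eq_r_rho} turns this into $\rho^4\de\sigma^2$. Summing the three pieces yields \eqref{eq_schw_analytic_rho}.

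Since all three metric components are already in hand, there is no genuine obstacle here: the only thing to verify is that the coefficient pattern in each block really is that of a perfect square -- the same scalar factor on all three monomials, with the middle one doubled. This is forced by the form of the coordinate change, in which $\partial t/\partial\tau$ and $\partial t/\partial\xi$ share the common factor $\rho$ by \eqref{eq_t_rho}, and in which $r$ depends on $(\tau,\xi)$ only through $\rho$. The argument is thus essentially bookkeeping, the single conceptual point being the recognition of the two total-differential squares $\de\rho^2$ and $\(F_\tau\de\tau + F_\xi\de\xi\)^2$.
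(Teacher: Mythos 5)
Your proof is correct, and it reaches \eqref{eq_schw_analytic_rho} by what is essentially the paper's computation run backwards. The paper never touches the component formulas (\ref{eq_metric_coeff_analytic_tau_tau_rho}--\ref{eq_metric_coeff_analytic_xi_xi_rho}): it notes that \eqref{eq_t_rho} and \eqref{eq_r_rho} give $\de t = \rho\(F_\tau\de\tau + F_\xi\de\xi\)$ and $\de r = 2\rho\de\rho$, and substitutes these differentials directly into the original line element \eqref{eq_schw_schw}, where the factors of $\rho^2$ from $\de t^2$ and $\de r^2$ combine with $g_{tt}=(2m-\rho^2)/\rho^2$ and $g_{rr}=-\rho^2/(2m-\rho^2)$ to produce the three terms of \eqref{eq_schw_analytic_rho} in one step. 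You instead start from the already-expanded components and re-collect them into the two perfect squares $\de\rho^2$ and $\(F_\tau\de\tau+F_\xi\de\xi\)^2$ -- exactly the expansion that the paper's substitution would generate, read in reverse. Both arguments are sound and of comparable length; the paper's is more self-contained, needing only the coordinate change and \eqref{eq_schw_schw}, while yours has the minor virtue of confirming that the component formulas in the proof of Theorem \ref{thm_schw_analytic} and the line element \eqref{eq_schw_analytic_rho} are mutually consistent. Your closing structural observation -- that the perfect-square pattern is forced because $t$ and $r$ enter the pullback only through total differentials sharing the factors $\rho$ and $2\rho$ respectively -- is precisely the reason the paper's one-line substitution works; the other ingredient worth naming explicitly is that $g_{tr}=0$ in the Schwarzschild coordinates, which is what guarantees that collecting by the factors $g_{tt}$ and $g_{rr}$ leaves no stray cross terms outside the two squares.
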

\begin{proof}
From \eqref{eq_t_rho} it follows that
\begin{equation}
\de t = \rho(F_\tau\de\tau + F_\xi\de\xi).
\end{equation}
From \eqref{eq_r_rho} it follows that
\begin{equation}
\de r = 2\rho\de\rho.
\end{equation}

By substituting $t$ and $r$ into \eqref{eq_schw_schw} we get the result.
\end{proof}

\begin{corollary}
\label{thm_metric_analytic_det_g_rho}
The determinant of the metric in the new coordinates is
\begin{equation}
\label{eq_metric_analytic_det_g_rho}
\det g = - 4\rho^4 \(F_\tau\dsfrac{\partial \rho}{\partial \xi} - F_\xi\dsfrac{\partial \rho}{\partial \tau}\)^2
\end{equation}
\end{corollary}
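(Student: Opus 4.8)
The plan is to read the three coefficients of the $(\tau,\xi)$ block straight off \eqref{eq_metric_coeff_analytic_tau_tau_rho}--\eqref{eq_metric_coeff_analytic_xi_xi_rho} and evaluate the $2\times2$ determinant $g_{\tau\tau}g_{\xi\xi}-g_{\tau\xi}^2$; this is the determinant in the statement, the sphere factor $\rho^4\de\sigma^2$ entering only multiplicatively (through $\rho^8\sin^2\theta$) and being handled by the warped-product construction of Theorem \ref{thm_schw_analytic}. Rather than expand blindly, I would first observe that each coefficient is assembled from exactly two ingredients: setting $A=-\dsfrac{4\rho^4}{2m-\rho^2}$ and $B=2m-\rho^2$, the block is
\begin{equation*}
\begin{pmatrix} g_{\tau\tau} & g_{\tau\xi} \\ g_{\tau\xi} & g_{\xi\xi}\end{pmatrix}
= A\, u u^{\top} + B\, v v^{\top},
\end{equation*}
with $u=\left(\dsfrac{\partial\rho}{\partial\tau},\dsfrac{\partial\rho}{\partial\xi}\right)$ and $v=(F_\tau,F_\xi)$. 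In intrinsic terms the two-dimensional metric is the sum of the two rank-one forms $A\,\de\rho\otimes\de\rho$ and $B\,(F_\tau\de\tau+F_\xi\de\xi)^{\otimes 2}$ already displayed in \eqref{eq_schw_analytic_rho}.

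The key step is then the elementary identity for the determinant of a sum of two rank-one symmetric $2\times2$ matrices,
\begin{equation*}
\det\left(A\, u u^{\top} + B\, v v^{\top}\right) = AB\,(u_1 v_2 - u_2 v_1)^2 .
\end{equation*}
This follows by expanding the determinant through multilinearity in its columns: the pure $A^2$ and $B^2$ contributions each carry a repeated column and so vanish, leaving only the mixed term, which reassembles into the square of the $2\times2$ determinant $\det[\,u\ v\,]$. Applying this to our block gives $\det g = AB\left(\dsfrac{\partial\rho}{\partial\tau}F_\xi-\dsfrac{\partial\rho}{\partial\xi}F_\tau\right)^2$.

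Finally I would simplify the prefactor, noting that $AB=-\dsfrac{4\rho^4}{2m-\rho^2}(2m-\rho^2)=-4\rho^4$ with the $(2m-\rho^2)$ factors cancelling cleanly, and that the bracket differs from $F_\tau\dsfrac{\partial\rho}{\partial\xi}-F_\xi\dsfrac{\partial\rho}{\partial\tau}$ only by an overall sign, which is immaterial under the square; this yields exactly \eqref{eq_metric_analytic_det_g_rho}. There is no genuine obstacle here, since the computation is a routine $2\times2$ determinant; the only points to watch are the cancellation of the $(2m-\rho^2)$ factors in $AB$ and the vanishing of the $A^2$ and $B^2$ terms (equivalently, the fact that $\de\rho\otimes\de\rho$ and $(F_\tau\de\tau+F_\xi\de\xi)^{\otimes 2}$ are each degenerate). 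If one prefers to avoid the rank-one identity, the same result drops out of a direct expansion of $g_{\tau\tau}g_{\xi\xi}-g_{\tau\xi}^2$ followed by regrouping the six surviving cross terms into a perfect square.
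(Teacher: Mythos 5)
Your proof is correct, but it takes a genuinely different route from the paper's. The paper never touches the transformed coefficients \eqref{eq_metric_coeff_analytic_tau_tau_rho}--\eqref{eq_metric_coeff_analytic_xi_xi_rho} at all: it applies the transformation law for the metric determinant under the coordinate change, writing $\det g = g_{tt}g_{rr}\,(\det J)^2$ with $J$ the Jacobian of $(t,r)$ with respect to $(\tau,\xi)$, uses the identity $g_{tt}g_{rr}=-1$ visible in \eqref{eq_metric_coeff_schw}, substitutes $\partial t/\partial\tau=\rho F_\tau$, $\partial t/\partial\xi=\rho F_\xi$, $\partial r/\partial\tau=2\rho\,\partial\rho/\partial\tau$, $\partial r/\partial\xi=2\rho\,\partial\rho/\partial\xi$, and pulls the factor $(\rho\cdot 2\rho)^2=4\rho^4$ out of the squared $2\times 2$ determinant. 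Your argument instead works downstream of the theorem, reading the new block as $A\,uu^{\top}+B\,vv^{\top}$ and proving the rank-one identity $\det\(A\,uu^{\top}+B\,vv^{\top}\)=AB\,(u_1v_2-u_2v_1)^2$ by column multilinearity. The two computations rest on the same underlying factorization --- your matrix with columns $u,v$ is just the Jacobian with the $\rho$'s and $2$'s reshuffled into $A$ and $B$ --- but the packaging differs and each has its merits. The paper's version is shorter and makes a conceptual point that feeds Remark \ref{rem_jacobian}: since $g_{tt}g_{rr}\equiv -1$, the degeneracy of the extended metric at $\rho=0$ is precisely the degeneracy of the coordinate change itself. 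Your version is self-contained at the level of the final coefficients, never needing to refer back to the Schwarzschild form, and the rank-one decomposition explains a priori why the answer must be $AB$ times a perfect square: the two summands $A\,\de\rho\otimes\de\rho$ and $B\(F_\tau\de\tau+F_\xi\de\xi\)^{\otimes 2}$ are each degenerate, so the $A^2$ and $B^2$ terms cannot survive. Both derivations are sound, both correctly restrict attention to the $(\tau,\xi)$ block (the angular block only multiplies the four-dimensional determinant by $\rho^8\sin^2\theta$), and both land exactly on \eqref{eq_metric_analytic_det_g_rho}.
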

\begin{proof}
Direct calculation gives
\begin{equation*}
\det g = g_{tt}g_{rr}
\left|
\begin{array}{ccc}
    \dsfrac{\partial t}{\partial \tau} & \dsfrac{\partial t}{\partial \xi} \\
    \dsfrac{\partial r}{\partial \tau} & \dsfrac{\partial r}{\partial \xi} \\
\end{array}
\right|^2
= -
\left|
\begin{array}{ccc}
    \rho F_\tau & \rho F_\xi \\
    2\rho\dsfrac{\partial \rho}{\partial \tau} & 2\rho\dsfrac{\partial \rho}{\partial \xi} \\
\end{array}
\right|^2
= - 4\rho^4
\left|
\begin{array}{ccc}
    F_\tau & F_\xi \\
    \dsfrac{\partial \rho}{\partial \tau} & \dsfrac{\partial \rho}{\partial \xi} \\
\end{array}
\right|^2
\end{equation*}
\end{proof}

\begin{remark}
The common belief is that it is impossible to extend the {\schw} metric so that it becomes smooth at the singularity, instead of becoming infinite. But this can be done, if we understand that the {\schw} coordinates are singular at $r=0$ (and so are the other known coordinate systems for the {\schw} metric). To pass to a regular coordinate system from a singular one, we need to use a coordinate change which is singular, and the singularity in the coordinate change coincides with the singularity of the metric. This can be viewed as analogous to the Eddington-Finkelstein coordinate change, which removes the apparent singularity on the event horizon. In both cases the metric is made smooth at points where it was thought to be infinite, the only difference is that in our case, at $r=0$, the metric becomes degenerate.
\end{remark}

\begin{example}
\label{thm_metric_analytic_r}
The function $\rho$ has the simplest expression when it depends on only one of the two variables, say $\tau$. To find a coordinate change as we want, let's assume that $\rho$ is the simplest function of $\tau$, $\rho=\tau$.

The metric becomes
\begin{equation}
\label{eq_metric_coeff_analytic_tau_tau_tau}
g_{\tau\tau} = -\dsfrac{4\tau^4}{2 m - \tau^2} + (2m - \tau^2) F_\tau^2
\end{equation}
\begin{equation}
\label{eq_metric_coeff_analytic_tau_xi_tau}
g_{\tau\xi} = (2m - \tau^2) F_\tau F_\xi
\end{equation}
\begin{equation}
\label{eq_metric_coeff_analytic_xi_xi_tau}
g_{\xi\xi} = (2m - \tau^2) F_\xi^2
\end{equation}
The determinant of the metric becomes
\begin{equation}
\label{eq_metric_analytic_det_g_tau}
\det g = - 4\tau^4 F_\xi^2.
\end{equation}
The four-metric takes the form
\begin{equation}
\label{eq_schw_analytic_tau}
\de s^2 = -\dsfrac{4\tau^4}{2m-\tau^2}\de \tau^2 + (2m-\tau^2)\(F_\tau\de\tau + F_\xi\de\xi\)^2 + \tau^4\de\sigma^2
\end{equation}
\end{example}

\begin{example}
\label{thm_metric_analytic_r_t}
The Example \eqref{thm_metric_analytic_r} simplified the form of $r(\tau,\xi)$. We can, in addition, simplify $t(\tau,\xi)$. Equation \eqref{eq_t_rho} suggests that we take $t$ is a product between a power of $\tau$ and a function of $\tau$ and $\xi$. Let's assume that it has the form $\xi\tau^T$, where $T\geq 2$ in order to satisfy \eqref{eq_t_rho}. Hence,
\begin{equation}
\label{eq_coordinate_semireg}
\begin{array}{l}
\bigg\{
\begin{array}{ll}
r &= \tau^2 \\
t &= \xi\tau^T \\
\end{array}
\\
\end{array}
\end{equation}
Then, we have
\begin{equation}
\label{eq_coordinate_jacobian}
\dsfrac{\partial r}{\partial \tau} = 2\tau,\,
\dsfrac{\partial r}{\partial \xi} = 0,\,
\dsfrac{\partial t}{\partial \tau} = T\xi\tau^{T-1},\,
\dsfrac{\partial t}{\partial \xi} = \tau^T
\end{equation}
and
\begin{equation}
\label{eq_F_tau_xi}
\begin{array}{l}
\bigg\{
\begin{array}{ll}
	F_\tau &= T\xi\tau^{T-2} \\
	F_\xi &= \tau^{T-1} \\
\end{array}
\\
\end{array}
\end{equation}
The metric takes the form
\begin{equation}
\label{eq_metric_coeff_analytic_tau_tau_tau_xi}
g_{\tau\tau} = -\dsfrac{4\tau^4}{2 m - \tau^2} + T^2\xi^2(2m - \tau^2) \tau^{2T-4}
\end{equation}
\begin{equation}
\label{eq_metric_coeff_analytic_tau_xi_tau_xi}
g_{\tau\xi} = T\xi(2m - \tau^2) \tau^{2T-3}
\end{equation}
\begin{equation}
\label{eq_metric_coeff_analytic_xi_xi_tau_xi}
g_{\xi\xi} = (2m - \tau^2) \tau^{2T-2}
\end{equation}
and its determinant
\begin{equation}
\label{eq_metric_analytic_det_g_tau_xi}
\det g = - 4 \tau^{2T+2}.
\end{equation}

The four-metric is
\begin{equation}
\label{eq_schw_analytic_tau_xi}
\de s^2 = -\dsfrac{4\tau^4}{2m-\tau^2}\de \tau^2 + (2m-\tau^2)\tau^{2T-4}\(T\xi\de\tau + \tau\de\xi\)^2 + \tau^4\de\sigma^2
\end{equation}
\end{example}

\begin{remark}
\label{rem_jacobian}
When we pass from one coordinate sistem $(\tau,\xi)$ characterized by $T$ to another $(\tau',\xi')$, characterized by $T'\neq T$, as in \eqref{eq_coordinate_semireg}, the transformation has the Jacobian singular at $r=0$. To check this, we use $r=\tau^2=\tau'^2$. The Jacobian is then
\begin{equation}
J=\left(
\begin{array}{ccc}
    \dsfrac{\partial \tau}{\partial \tau'} & \dsfrac{\partial \tau}{\partial \xi'} \\
    \dsfrac{\partial \xi}{\partial \tau'} & \dsfrac{\partial \xi}{\partial \xi'} \\
\end{array}
\right)
= \left(
\begin{array}{ccc}
    \pm 1 & 0 \\
    0 & \tau^{T'-T} \\
\end{array}
\right),
\end{equation}
and it is singular at $\tau=0$.
This seems to suggest that the different coordinate systems we found in the Example \ref{thm_metric_analytic_r_t} represent distinct solutions. This raises the following open question.
\end{remark}

\begin{openproblem}
Can we find natural conditions ensuring the uniqueness of the analytic extensions of the {\schw} solution at the singularity $\tau=0$? Or can we consider all analytic extensions of this type to be equivalent, via coordinate changes which may be singular?
\end{openproblem}

To support the second possibility, we can make the observation that the Jacobian from the Remark \ref{rem_jacobian} is degenerate when we pass from a coordinate system characterized by $T=2$ to another one with another value of $T$, but the converse is not true.

\section{{\ssemireg} extension of the {\schw} spacetime}
\label{s_schw_semireg}

In section \sref{s_schw_analytic} we found an infinite family of coordinate changes which make the metric smooth. As we shall see now, among these solutions there is one which ensures the {\semireg}ity of the metric.

\begin{theorem}
\label{thm_schw_semireg}
The {\schw} metric admits an analytic extension in which the singularity at $r=0$ is {\semireg}.
\end{theorem}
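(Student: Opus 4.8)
The plan is to stay inside the one-parameter family of analytic extensions built in Example~\ref{thm_metric_analytic_r_t}, given by $r=\tau^2$, $t=\xi\tau^T$ with integer $T\geq 2$, and to single out a value of $T$ for which the degenerate metric \eqref{eq_schw_analytic_tau_xi} is {\semireg}. By the theory of \cite{Sto11a,Sto11b}, a radical{\hyph}stationary metric is {\semireg} exactly when its canonical Riemann tensor $R_{abcd}$ is smooth; away from the degeneracy locus $\tau=0$ this tensor is the ordinary Riemann tensor and is invariant under the smooth coordinate change \eqref{eq_coordinate_semireg}. Since \eqref{eq_schw_analytic_tau_xi} is a warped product of the base $(\tau,\xi)$ with $S^2$ and warping function $\rho^2=\tau^2$, I would organize the verification by the {\semireg} warped{\hyph}product criterion of \cite{Sto11b}, and reduce the whole problem to deciding for which $T$ the coordinate{\hyph}transformed {\schw} curvature extends analytically across $\tau=0$.

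First I would dispose of the base and fibre sectors. The two{\hyph}dimensional base has a single independent curvature component, $R_{\tau\xi\tau\xi}=(\det J)^2R_{trtr}$, and using $\det J=-2\tau^{T+1}$ from \eqref{eq_coordinate_jacobian} together with $R_{trtr}\sim m/r^3=m/\tau^6$ one finds $R_{\tau\xi\tau\xi}\sim\tau^{2T-4}$, which is analytic for every $T\geq 2$; thus the base is {\semireg} throughout the family, and radical{\hyph}stationarity holds because all metric components vanish to positive order at $\tau=0$, so the Koszul form vanishes there. The pure fibre component transforms to $R_{\theta\phi\theta\phi}\sim m\tau^2\sin^2\theta$, also analytic. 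Hence any obstruction must sit in the mixed base--fibre sector, which is precisely the sector controlled by the warped{\hyph}product criterion.

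The decisive computation is therefore the mixed sector. Transforming $R_{t\theta t\theta}\sim\tau^{-4}$ and the harmless $R_{r\theta r\theta}\to\tn{const}$ by \eqref{eq_coordinate_jacobian}, the danger comes from $\partial t/\partial\tau=T\xi\tau^{T-1}$ meeting the strongest blow{\hyph}up $R_{t\theta t\theta}$, which yields $R_{\tau\theta\tau\theta}\sim\tau^{2T-6}$ (the $\partial r/\partial\tau$ branch contributes only order $\tau^2$); similarly $R_{\tau\theta\xi\theta}\sim\tau^{2T-5}$ and $R_{\xi\theta\xi\theta}\sim\tau^{2T-4}$, with identical $\phi$ counterparts. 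All of these, and hence the warped{\hyph}product criterion, are satisfied as soon as $2T-6\geq 0$, i.e.\ $T\geq 3$. I would therefore fix $T=3$, which is admissible in the sense of \eqref{eq_t_rho} since $F_\tau=3\xi\tau$ and $F_\xi=\tau^2$ are smooth, and conclude {\semireg}ity of the four{\hyph}dimensional extension from the criterion of \cite{Sto11b}.

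The main obstacle is exactly this mixed (warping) sector: the base and fibre are well behaved for the entire family, so the failure at $T=2$ and the selection of $T=3$ are carried solely by $R_{\tau\theta\tau\theta}$ and its Hessian{\hyph}of{\hyph}warping content. The care required is twofold: getting the power count of this one component right, and justifying via \cite{Sto11a,Sto11b} that the finite coordinate limit of the ordinary curvature across $\tau=0$ is genuinely the canonical $R_{abcd}$ of the singular metric, so that its analyticity is equivalent to {\semireg}ity rather than a coordinate artefact.
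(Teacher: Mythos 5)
Your reduction rests on a biconditional that is false, and it leads you to the wrong value of $T$. You claim that a radical{\hyph}stationary metric is {\semireg} \emph{exactly} when its canonical Riemann tensor $R_{abcd}$ is smooth. Only one direction is true: {\semireg}ity implies smooth $R_{abcd}$. The definition used in the paper (from \cite{Sto11a}) requires that \emph{each} contraction $g^{st}\Gamma_{abs}\Gamma_{cdt}$ be smooth, for every choice of indices $a,b,c,d$; smoothness of the curvature only controls the particular antisymmetrized combinations of such contractions (plus second derivatives of the metric) that enter the curvature formula. In two dimensions this loses almost everything: the base has a single independent component $R_{\tau\xi\tau\xi}$, which involves only $g^{st}\(\Gamma_{\tau\tau s}\Gamma_{\xi\xi t}-\Gamma_{\tau\xi s}\Gamma_{\xi\tau t}\)$, whereas the {\semireg}ity condition also constrains contractions such as $g^{st}\Gamma_{\tau\tau s}\Gamma_{\tau\tau t}$ that never appear in $R_{\tau\xi\tau\xi}$. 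The paper therefore checks the defining products directly, by power counting on the inverse metric (\ref{eq_metric_inv_coeff_semireg_tau_tau}--\ref{eq_metric_inv_coeff_semireg_xi_xi}) and the metric derivatives (\ref{eq_pd_tau_tau_tau}--\ref{eq_pd_xi_xi_xi}); the resulting condition $-1-2T+3\min(3,2T-5)\geq 0$ has the \emph{unique} solution $T=4$, not your $T=3$.

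Indeed, $T=3$ fails, and it is precisely a counterexample to your criterion. For $T=3$ one computes from (\ref{eq_metric_coeff_analytic_tau_tau_tau_xi}--\ref{eq_metric_coeff_analytic_xi_xi_tau_xi}) and (\ref{eq_pd_tau_tau_tau}--\ref{eq_pd_xi_tau_tau}): $\Gamma_{\tau\tau\tau}=\frac{1}{2}\partial_\tau g_{\tau\tau}=18m\xi^2\tau+O(\tau^3)$, $\Gamma_{\tau\tau\xi}=\partial_\tau g_{\tau\xi}-\frac{1}{2}\partial_\xi g_{\tau\tau}=-6\xi\tau^4$, while $g^{\tau\tau}=-\frac{2m-\tau^2}{4\tau^4}$, $g^{\tau\xi}=\frac{3\xi(2m-\tau^2)}{4\tau^5}$, $g^{\xi\xi}=O(\tau^{-6})$. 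Hence
\begin{equation*}
g^{st}\Gamma_{\tau\tau s}\Gamma_{\tau\tau t}
= g^{\tau\tau}\Gamma_{\tau\tau\tau}^2
+ 2\,g^{\tau\xi}\Gamma_{\tau\tau\tau}\Gamma_{\tau\tau\xi}
+ g^{\xi\xi}\Gamma_{\tau\tau\xi}^2
= -\frac{162\,m^3\xi^4}{\tau^2} + O(1),
\end{equation*}
which diverges at $\tau=0$. So the $T=3$ extension has analytic curvature components (your power counts $R_{\tau\xi\tau\xi}\sim\tau^{2T-4}$ and $R_{\tau\theta\tau\theta}\sim\tau^{2T-6}$ are correct) and yet is \emph{not} {\semireg}. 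This also invalidates your use of the warped{\hyph}product step: the theorem of \cite{Sto11b} takes as hypothesis that the base is {\semireg} (together with conditions on the warping function), not that some mixed curvature components extend analytically, so your ``mixed sector'' check cannot substitute for it. To repair the argument you must verify the defining contractions $g^{st}\Gamma_{abs}\Gamma_{cdt}$ for all index combinations, as the paper does, and that verification forces $T=4$, giving the metric \eqref{eq_schw_semireg}.
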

\begin{proof}
To show that the metric is {\semireg}, it is enough to show that there is a coordinate system in which the products of the form
\begin{equation}
\label{eq_semireg_condition_coord}
g^{st}\Gamma_{abs}\Gamma_{cdt}
\end{equation}
are all smooth \cite{Sto11a}, where $\Gamma_{abc}$ are Christoffel's symbols of the first kind. In a coordinate system in which the metric is smooth, as in \sref{s_schw_analytic}
, Christoffel's symbols of the first kind are also smooth. But the inverse metric $g^{st}$ is not smooth for $r=0$. We will show that the products from the expression \eqref{eq_semireg_condition_coord} are smooth.

We use the solution from the Example \ref{thm_metric_analytic_r_t}, and try to find a value for $T$, so that the metric is {\semireg}.

The inverse of the metric has the coefficients given by $g^{\tau\tau} = g_{\xi\xi}/{\det g}$, $g^{\xi\xi} = g_{\tau\tau}/{\det g}$, and $g^{\tau\xi} = g^{\xi\tau} = -g_{\tau\xi}/{\det g}$. It follows from (\ref{eq_metric_coeff_analytic_tau_tau_tau_xi}--\ref{eq_metric_coeff_analytic_xi_xi_tau_xi}) that

\begin{equation}
\label{eq_metric_inv_coeff_semireg_tau_tau}
g^{\tau\tau} = -\dsfrac{1}{4}(2m - \tau^2)\tau^{-4}
\end{equation}
\begin{equation}
\label{eq_metric_inv_coeff_semireg_tau_xi}
g^{\tau\xi} = \dsfrac{1}{4}T\xi(2m - \tau^2)\tau^{-5}
\end{equation}
\begin{equation}
\label{eq_metric_inv_coeff_semireg_xi_xi}
g^{\xi\xi} = \dsfrac{\tau^{-2T+2}}{2 m - \tau^2} - \dsfrac{1}{4}T^2 \xi^2 (2m - \tau^2)\tau^{-6}
\end{equation}
Christoffel's symbols of the first kind are given by
\begin{equation}
\label{eq_christoffel}
\Gamma_{abc} = \dsfrac 1 2 \(\partial_a g_{bc} + \partial_b g_{ca} - \partial_c g_{ab}\),
\end{equation}
so we have to calculate the partial derivatives of the coefficients of the metric.

From \eqref{eq_coordinate_jacobian} and (\ref{eq_metric_coeff_analytic_tau_tau_tau_xi}--\ref{eq_metric_coeff_analytic_xi_xi_tau_xi}) we have:
\begin{equation*}
\begin{array}{lll}
\partial_{\tau}g_{\tau\tau} &=& \partial_{\tau}\(-\dsfrac{4\tau^4}{2 m - \tau^2} + \xi^2T^2(2m - \tau^2) \tau^{2T-4}\) \\
&=& -4\dsfrac{4\tau^3(2 m - \tau^2) + 2\tau^5}{(2 m - \tau^2)^2} + 2T^2(2T-4) m \xi^2\tau^{2T-5} \\
&&- T^2 (2T-2)\xi^2\tau^{2T-3}, \\
\end{array}
\end{equation*}
hence
\begin{equation}
\label{eq_pd_tau_tau_tau}
\partial_{\tau}g_{\tau\tau} = 8\dsfrac{\tau^5 - 4m\tau^3}{(2 m - \tau^2)^2} + 2T^2(2T-4) m \xi^2\tau^{2T-5}- T^2 (2T-2)\xi^2\tau^{2T-3}.
\end{equation}
Similarly,
\begin{equation}
\label{eq_pd_tau_tau_xi}
\partial_{\tau}g_{\tau\xi} = 2 T (2T-3) m\xi\tau^{2T-4} - T (2T-1) \xi\tau^{2T-2},
\end{equation}
\begin{equation}
\label{eq_pd_tau_xi_xi}
\partial_{\tau}g_{\xi\xi} = 2 m (2T-2)\tau^{2T-3} - 2T\tau^{2T-1},
\end{equation}
\begin{equation}
\label{eq_pd_xi_tau_tau}
\partial_{\xi}g_{\tau\tau} = 2 T^2 \xi (2m - \tau^2)\tau^{2T-4},
\end{equation}
\begin{equation}
\label{eq_pd_xi_tau_xi}
\partial_{\xi}g_{\tau\xi} = T(2m - \tau^2)\tau^{2T-3},
\end{equation}
and
\begin{equation}
\label{eq_pd_xi_xi_xi}
\partial_{\xi}g_{\xi\xi} = 0.
\end{equation}

To ensure that the expression \eqref{eq_semireg_condition_coord} is smooth, we use power counting and try to find a value of $T$ for which it doesn't contain negative powers of $\tau$. The least power of $\tau$ in the partial derivatives of the metric is $\min(3,2T-5)$, as we can see by inspecting equations (\ref{eq_pd_tau_tau_tau}--\ref{eq_pd_xi_xi_xi}). The least power of $\tau$ in the inverse metric is $\min(-6,-2T+2)$, as it follows from the equations (\ref{eq_metric_inv_coeff_semireg_tau_tau}--\ref{eq_metric_inv_coeff_semireg_xi_xi}). Since $\min(-6,-2T+2) = -3 - \max(3,2T-5)$, the condition that the least power of $\tau$ in \eqref{eq_semireg_condition_coord} is non-negative is
\begin{equation}
	-1 - 2T + 3\min(3,2T-5) \geq 0
\end{equation}
with the unique solution
\begin{equation}
	T = 4.
\end{equation}
Hence, taking $T=4$ ensures the smoothness of \eqref{eq_semireg_condition_coord}, and by this, the {\semireg}ity of the metric in two dimensions $(\tau,\xi)$.

When going back to four dimensions, we remember the central theorem of {\semireg} warped products from \cite{Sto11b}, stating that the warped product between the two-dimensional extension $(\tau,\xi)$ and the sphere $S^2$, with warping function $\tau^2$, is {\semireg}.
\end{proof}

It is useful to extract from the proof the expression of the metric:

\begin{corollary}
The metric
\begin{equation}
\label{eq_schw_semireg}
\de s^2 = -\dsfrac{4\tau^4}{2m-\tau^2}\de \tau^2 + (2m-\tau^2)\tau^4\(4\xi\de\tau + \tau\de\xi\)^2 + \tau^4\de\sigma^2
\end{equation}
is an analytic extension of the {\schw} metric, which is {\semireg}, including at the singularity $r=0$.
\end{corollary}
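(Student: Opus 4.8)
The plan is to recognize the displayed metric \eqref{eq_schw_semireg} as the $T=4$ member of the analytic family constructed in Example \ref{thm_metric_analytic_r_t}, and then to invoke the semi-regularity already established in Theorem \ref{thm_schw_semireg}. On this view the corollary requires no new computation: it is a matter of one substitution together with two citations, since the statement is exactly what the preceding sentence advertises -- an extraction from the proof of the theorem.

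First I would substitute $T=4$ into the general four-metric \eqref{eq_schw_analytic_tau_xi}. The exponent $2T-4$ becomes $4$ and the coefficient $T\xi$ becomes $4\xi$, so the middle term reads $(2m-\tau^2)\tau^4\(4\xi\de\tau+\tau\de\xi\)^2$, which is precisely the middle term of \eqref{eq_schw_semireg}; the first and last terms do not depend on $T$ and already coincide. This identifies \eqref{eq_schw_semireg} with the coordinate change $r=\tau^2$, $t=\xi\tau^4$.

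Next, for analyticity of the extension, I would note that $\rho=\tau$, $F_\tau=4\xi\tau^2$ and $F_\xi=\tau^3$ are analytic and satisfy the conditions \eqref{eq_t_rho}, so by the analyticity statement at the end of the proof of Theorem \ref{thm_schw_analytic} every component of the two-dimensional metric is analytic in a neighbourhood of $\tau=0$; warping by the analytic function $\tau^2$ over $S^2$ keeps the full four-metric analytic. Away from $\tau=0$ the change of coordinates is invertible, so on the region $0<r<2m$ the metric agrees with the Schwarzschild metric \eqref{eq_schw_schw}. This is exactly what it means for \eqref{eq_schw_semireg} to be an analytic extension of the Schwarzschild solution across the singularity.

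Finally, for semi-regularity I would appeal directly to Theorem \ref{thm_schw_semireg}: the power-counting carried out there singles out $T=4$ as the unique value for which the contractions $g^{st}\Gamma_{abs}\Gamma_{cdt}$ are smooth on the two-dimensional factor, and the central theorem of semi-regular warped products of \cite{Sto11b} then promotes this to semi-regularity of the full warped metric over $S^2$, including at $r=0$. I expect no genuine obstacle here; the substantive work -- the power-counting that isolates $T=4$ -- was already done in the theorem, and the corollary merely records its explicit output.
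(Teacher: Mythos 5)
Your proposal is correct and follows exactly the paper's route: the paper gives this corollary with an empty proof (just \qed), treating it precisely as you do --- as the extraction of the $T=4$ case of the metric \eqref{eq_schw_analytic_tau_xi} from Example \ref{thm_metric_analytic_r_t}, with analyticity coming from Theorem \ref{thm_schw_analytic} and semi-regularity from the power-counting in Theorem \ref{thm_schw_semireg} together with the warped-product theorem of \cite{Sto11b}. Your write-up merely makes explicit the substitution and citations that the paper leaves implicit.
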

\qed

\begin{remark}
The Riemann curvature tensor $R_{abcd}$ is smooth, because the metric is {\semireg}. How can it be smooth, when we know that the Riemann curvature of the {\schw} metric tends to infinity when it approaches the singularity $r=0$? The answer is that the coefficients of $R_{abcd}$ depend on the coordinate system. Since the usual coordinates used with the {\schw} black hole solution are singular with respect to ours, a tensor which is smooth  in our coordinates may appear singular in {\schw} coordinates.
But this should not be a big surprise, because for the {\schw} solution the Ricci tensor is $0$, hence the scalar curvature is $0$ too, and the Einstein's equation is simply $T_{ab}=0$.
On the other hand, the Kretschmann invariant $R_{abcd}R^{abcd}$ still becomes infinite at $r=0$, of course, being a scalar, and therefore remaining unchanged at the coordinate transforms. But it does become infinite only because $R^{abcd}$ becomes infinite.
\end{remark}

\section{Penrose-Carter coordinates for the {\semireg} solution}
\label{s_schw_semireg_penrose_carter}

To move to Penrose-Carter coordinates, we apply the same steps as one usually applies for the {\schw} black hole (\citep{HE95}{150-156}). More precisely, the lightlike coordinates for the Penrose-Carter diagram are
\begin{equation}
\label{eq_schw_penrose_coord}
\begin{array}{l}
\Bigg\{
\begin{array}{lll}
v''&=&\arctan\(\phantom{-}(2m)^{-1/2}\exp\(\phantom{-}\dsfrac{v}{4m}\)\) \\
w''&=&\arctan\(-(2m)^{-1/2}\exp\(-\dsfrac{w}{4m}\)\) \\
\end{array}
\\
\end{array}
\end{equation}
where $v,w$ are the Eddington-Finkelstein lightlike coordinates
\begin{equation}
\label{eq_schw_lightlike_coord}
\begin{array}{l}
\bigg\{
\begin{array}{lll}
v &=& t+r+2m\ln(r-2m) \\
w &=& t-r-2m\ln(r-2m). \\
\end{array}
\\
\end{array}
\end{equation}

\image{std-schwarzschild}{0.5}{The maximally extended {\schw} solution, in Penrose-Carter coordiantes.}

Usually, in the Penrose-Carter diagram of the {\schw} spacetime is considered that the maximal analytic extension is given by the conditions $v''+w''\in(-\pi,\pi)$ and $v'',w''\in\(-\dsfrac{\pi}{2},\dsfrac{\pi}{2}\)$ (see Fig. \ref{std-schwarzschild}). This is because we have to stop at the singularity $r=0$, because the infinite values we get prevent the analytic continuation.

To move from the coordinates $(\tau,\xi)$ to the Penrose-Carter coordinates, we use the substitution \eqref{eq_coordinate_semireg}:
\begin{equation}
\label{eq_schw_semireg_lightlike_coord}
\begin{array}{l}
\bigg\{
\begin{array}{lll}
v &=& \xi\tau^4 + \tau^2 + 2m\ln(2m - \tau^2) \\
w &=& \xi\tau^4 - \tau^2 - 2m\ln(2m - \tau^2). \\
\end{array}
\\
\end{array}
\end{equation}

Our coordinates allow us to go beyond the singularity. As we can see from equation \eqref{eq_schw_semireg}, our solution extends to negative $\tau$ as well. From \eqref{eq_schw_semireg_lightlike_coord} we see that it is symmetric with respect to the hypersurface $\tau=0$. This leads to the Penrose-Carter diagram from Fig. \ref{semireg-schwarzschild}.

\image{semireg-schwarzschild}{0.5}{Our maximally extended {\schw} solution, in Penrose-Carter coordiantes.}

\section{The significance of the {\semireg} solution}
\label{s_schw_semireg_significance}

The main consequence of the extensibility of the {\schw} solution to a {\semireg} solution beyond the singularity is that the information is not lost there. This can apply as well to the case of an evaporating black hole (see Fig. \ref{evaporating-bh-s}).

Because of the no-hair theorem, the {\schw} solution is representative for non-rotating and electrically neutral black holes. If the black hole evaporates, the information reaching the singularity is lost (Fig. \ref{evaporating-bh-s} A). If the singularity is {\semireg}, it doesn't destroy the topology of spacetime (Fig. \ref{evaporating-bh-s} B). Moreover, although normally in this case the covariant derivative and other differential operators can't be defined, there is a way to construct them naturally (as shown in \cite{Sto11a}), allowing the rewriting of the field equations for the {\semireg} case, without running into infinities. This ensures that the field equations can go beyond the singularity.

\image{evaporating-bh-s}{0.7}{The Penrose-Carter diagram for a non-rotating and electrically neutral evaporating black hole whose singularity destroys the information.
\textbf{A.} Standard evaporating black hole.
\textbf{B.} Evaporating black hole extended beyond the singularity, whose singularity doesn't destroy the information.}

In the case of a black hole which is not primordial and evaporates completely in a finite time, all of the light rays traversing the singularity reach the past and future infinities. This means that the presence of a spacelike evaporating black hole is compatible with the global hyperbolicity, as in the diagram \ref{evaporating-bh-s} B.

The singularity is accompanied by violent and very destructive forces. But, as we can see from the {\semireg} formulation, there is no reason to consider that it destroys the information or the structure of spacetime.

\bibliographystyle{amsplain}

\begin{thebibliography}{1}

\bibitem{HE95}
S.~Hawking and G.~Ellis, \emph{{The Large Scale Structure of Space Time}},
  Cambridge University Press, 1995.

\bibitem{Sto11d}
C.~{Stoica}, \emph{{Cartan's Structural Equations for Degenerate Metric}},
  arXiv:math.DG /1111.0646 (2011).

\bibitem{Sto11a}
\bysame, \emph{{On Singular Semi-Riemannian Manifolds}}, arXiv:math.DG
  /1105.0201 (2011).

\bibitem{Sto11b}
\bysame, \emph{{Warped Products of Singular Semi-Riemannian Manifolds}},
  arXiv:math.DG /1105.3404 (2011).

\end{thebibliography}
\providecommand{\bysame}{\leavevmode\hbox to3em{\hrulefill}\thinspace}
\providecommand{\MR}{\relax\ifhmode\unskip\space\fi MR }
\providecommand{\MRhref}[2]{%
  \href{http://www.ams.org/mathscinet-getitem?mr=#1}{#2}
}
\providecommand{\href}[2]{#2}

\end{document}